\DeclareSymbolFont{matha}{OML}{txmi}{m}{it}
\DeclareMathSymbol{\varv}{\mathord}{matha}{118}
\begin{document}
	\title{ Coverage Probability of Distributed IRS Systems Under Spatially Correlated Channels}
	\author{Anastasios Papazafeiropoulos, Cunhua Pan, Ahmet Elbir, Pandelis Kourtessis, Symeon Chatzinotas, John M. Senior \thanks{A. Papazafeiropoulos is with the Communications and Intelligent Systems Research Group, University of Hertfordshire, Hatfield AL10 9AB, U. K., and with SnT at the University of Luxembourg, Luxembourg. C. Pan is with the School of Electronic Engineering and Computer Science at Queen Mary University of London, London E1 4NS, U.K. A. Elbir is with the EE department of Duzce University, Duzce, Turkey.
			P. Kourtessis and John M. Senior are with the Communications and Intelligent Systems Research Group, University of Hertfordshire, Hatfield AL10 9AB, U. K. S. Chatzinotas is with the SnT at the University of Luxembourg, Luxembourg. E-mails: tapapazaf@gmail.com, c.pan@qmul.ac.uk, ahmetmelbir@gmail.com \{p.kourtessis,j.m.senior\}@herts.ac.uk, symeon.chatzinotas@uni.lu.}}
	\maketitle

	\pagestyle{empty}

	\vspace{-2.9cm}
	
	\begin{abstract}
		This paper suggests the use of multiple distributed intelligent reflecting surfaces (IRSs) towards a smarter control of the propagation environment. Notably, we also take into account the inevitable correlated Rayleigh fading in IRS-assisted systems. In particular, in a single‐input and single‐output (SISO) system, we consider and compare two insightful scenarios, namely, a finite number of large IRSs and a large number of finite size IRSs to show which implementation method is more advantageous. In this direction, we derive the coverage probability in closed-form for both cases contingent on statistical channel state information (CSI) by using the deterministic equivalent (DE) analysis. Next, we obtain the optimal coverage probability. Among others, numerical results reveal that the addition of more surfaces outperforms the design scheme of adding more elements per surface. Moreover, in the case of uncorrelated Rayleigh fading, statistical CSI-based IRS systems do not allow the optimization of the coverage probability.
	\end{abstract}
	
	\begin{keywords}
		Intelligent reflecting surface (IRS), coverage probability, deterministic equivalents, beyond 5G networks.
	\end{keywords}
	\section{Introduction}
	The advancements in metasurfaces have enabled the development of intelligent reflecting surface (IRS), being a planar array that includes a large number of nearly passive reflecting elements \cite{Wu2020,Basar2019}. IRS provides a smart radio environment by realizing reflecting beamforming through its elements, which can introduce phase adjustments on the impinging wave with different objectives such as an increase of coverage and avoidance of obstacles. Also, its construction principles allow affordable and green transmission due to its low-cost hardware and low energy consumption, respectively.
	
	Many works have approached the concept of IRS from the wireless communication point of view due to its appealing advantages to achieve various tasks by adjusting the phase shifts of the reflecting surface elements, e.g., see \cite{Basar2019,Huang2019a,Bjoernson2019b,Pan2020,Kammoun2020, Elbir2020,Najafi2020,Guo2020} and references therein. Among them, in \cite{Pan2020}, we observe a maximization of the sum-rate with a transmit power constraint, in \cite{Huang2019a}, authors achieved maximization of the energy efficiency with signal-to-interference-plus-noise ratio (SINR) constraints, and in \cite{Elbir2020}, channel estimation, being an interesting research area in IRS-assisted systems due to their special characteristics,  was proposed by using a deep learning approach.
	
	In particular, in the communication-theoretic direction, the study of the coverage probability in IRS-assisted systems has attracted significant attention \cite{Guo2020,Yang2020}. However, all previous works assumed only one IRS while the performance of systems aided simultaneously by multiple IRSs, offering extended advantages such as a more robust avoidance of obstacles and improved coverage, has not been investigated except in \cite{Gao2020,Zhang2019a,Sun2020,Zhang2020}.
	
	In parallel, the assumption of independent Rayleigh fading, which is often assumed for tractable performance analysis, is unrealistic for IRS-assisted systems \cite{Bjoernson2020}. Although several works have accounted for the IRS correlation by acknowledging its importance, they relied on conventional correlation models \cite{Kammoun2020}, which are not directly applicable in IRSs as mentioned in \cite{Bjoernson2020}, where a practical correlation model was suggested.

Against the above background, we present the only work providing the coverage probability in closed-form for single‐input and single‐output (SISO) systems assisted simultaneously by multiple IRSs while accounting for the inevitable correlated Rayleigh fading requiring suitable mathematical manipulations. In particular, we consider two insightful design cornerstones: i) A finite number of large IRSs (the number of elements per IRS grows large) and ii) a large number of IRSs with each IRS having finite dimensions. Hence, contrary to \cite{Guo2020}, we establish the theoretical framework incorporating correlated fading into the analysis to identify the realistic potentials of IRSs before their final implementation and we also study the performance when the IRSs number becomes large. Compared to \cite{Gao2020}, which also assumed distributed IRSs and correlated fading, we focus on the coverage probability instead of the achievable rate and we rely on a more realistic correlation model while we account for the scenario of a large number of IRSs, which has not been addressed before. Moreover, we provide a methodology to optimize the reflect beamforming matrix based on statistical channel state information (CSI) that enables optimization at every several coherence intervals instead of frequent optimization at every coherence interval as in works relying on instantaneous CSI.

	\textit{Notation}: Vectors and matrices are denoted by boldface lower and upper case symbols, respectively. The notations $(\cdot)^\T$, $(\cdot)^\H$, and $\tr\!\left( {\cdot} \right)$ represent the transpose, Hermitian transpose, and trace operators, respectively. The expectation operator is denoted by $\EE\left[\cdot\right]$ while $ \diag\left(\ba \right) $ expresses a diagonal matrix with diagonal elements being the elements of vector $ \ba $ and $\diag\left(\bA\right) $ expresses a vector with elements the diagonal elements of $ \bA $. Also, the notations $ \arg\left(\cdot\right) $ and $ \mod(\cdot,\cdot) $ denote the argument function and the modulus operation while $ \lfloor \cdot \rfloor $ truncates the argument. Given two infinite sequences $a_n$ and $b_n$, the relation $a_n\asymp b_n$ is equivalent to $a_n - b_n \xrightarrow[ n \rightarrow \infty]{\mbox{a.s.}} 0$. Finally, $\bb \sim \cC\cN{(\b0,\mathbf{\Sigma})}$ represents a circularly symmetric complex Gaussian vector with {zero mean} and covariance matrix $\mathbf{\Sigma}$.
	
	\section{System Model}\label{System}
	We consider the smart connectivity between a single-antenna transmitter (TX) and a single-antenna receiver (RX) enabled by means of a set of $ M $ independent IRSs uniformly distributed in the intermediate space,  which are placed in the locations of obstacles. In other words, we assume multiple blocked areas, and the IRSs are expected to cover these areas. Also, we assume that each IRS, controlled through a perfect backhaul link by the transmitter, consists of a two-dimensional rectangular grid of $ N =N_{\mathrm{H}}N_{\mathrm{V}}$ passive unit elements with $ N_{\mathrm{H}} $ elements per row and $ N_{\mathrm{V}} $ elements per column that can modify the phase shifts of impinging waves. Severe blockage effects make any direct channel unavailable. To focus on the impact of the multitude of IRSs, we rely on the assumption of perfect CSI, and thus, the results act as upper bounds of practical implementations.
	
	Let a block-fading model with independent realizations across different coherence blocks for the description of all channels. In particular, we assume the existence of a direct link and $ M $ cascaded channels. The former is described by $ h_{\mathrm{d}}\sim \mathcal{CN}\left(0,\beta_{\mathrm{d}}\right) $, where $ \beta_{\mathrm{d}} $ is the path-loss. Regarding the IRS-assisted links, $ 	\bh_{m,1}=\left[h_{mn,1}, \ldots, h_{mN,1}\right]^{\T}\in \mathbb{C}^{N \times 1}$ expresses the channel fading vector between the TX and the $ m $th IRS while $ \bh_{m,2}=\left[h_{mn,2}, \ldots, h_{mN,2}\right]^{\T} \in \mathbb{C}^{N \times 1}$ corresponds to the link between the $ m $th IRS and the RX. Contrary to existing works, relying on independent Rayleigh and Rician fading models, we consider correlated Rayleigh fading.\footnote{The extension to correlated Rician fading, having a LoS component, is the topic of future work. Also, in this work we have considered a rich scattering environment while there are also works assuming a limited number of scatterers such as \cite{Najafi2020}.} Hence, by accounting for both small-scale fading and path-loss, we have
	\begin{align}
		\bh_{m,1}&\sim \mathcal{CN}\left(\b0, \beta_{m,1}\bR_{m,1}\right),\\
		\bh_{m,2}&\sim \mathcal{CN}\left(\b0, \beta_{m,2}\bR_{m,2}\right),
	\end{align}
	where $ \beta_{m,1} $, $ \beta_{m,2} $ are the path-losses while $ \bR_{m,1}\in \mathbb{C}^{N \times N} $, $ \bR_{m,2} \in \mathbb{C}^{N \times N}$ is the spatial covariance matrices of the respective links. \footnote{The path-losses and the covariance matrices are assumed known by applying practical methods, {e.g., see} \cite{Neumann2018}.} Herein, we use the correlation model proposed in \cite{Bjoernson2020} as suitable for IRSs under the conditions of rectangular IRSs and isotropic Rayleigh fading. Let the size of each IRS element be $ d_{\mathrm{H}}\times d_{\mathrm{V}} $, where $d_\mathrm{V}$ and $d_\mathrm{H}$ express its vertical height and its horizontal width, respectively. Then, the $ \left(i,j\right) $th element of the correlation matrix $ \bR_{m,k}, $ $ k\!\in \!\{1,2\} $ is given by
	\begin{align} \label{eq:Element}
		r_{ij,mk} = d_{\mathrm{H}} d_{\mathrm{V}} \mathrm{sinc} \left( 2 \|\mathbf{u}_{i,mk} - \mathbf{u}_{j,mk} \|/\lambda\right),
	\end{align}
	where $\mathbf{u}_{\epsilon,mk} = [0, \mod(\epsilon-1, N_\mathrm{H})d_\mathrm{H}, \lfloor (\epsilon-1)/N_\mathrm{H} \rfloor d_\mathrm{V}]^\T$, $\epsilon \in \{i,j\}$ and $\lambda$ is the wavelength of the plane wave.
	
	Based on a slowly varying flat-fading channel model, the complex-valued received signal at the RX through the network of $ M $ IRSs is described by
	\begin{align}
		y=\bigg(\sum^{M}_{m=1}\bh_{m,1}^{\H}\bPhi_{m}\bh_{m,2}+h_{\mathrm{d}}\!\bigg)x+n,\label{received}
	\end{align}
	where $ \bPhi_{m}=\mathrm{diag}\left(\al_{m1} \exp\left(j \theta_{m1}\right), \ldots, \al_{mN} \exp\left(j \theta_{mN}\right)\right)\in \mathbb{C}^{N\times N}$ expresses the response of the elements of the $ m $th IRS with $ \theta_{mn} \in \left[ 0, 2 \pi \right], n=1,\ldots,N$ and $ \al_{mn} \in (0,1]$ being the phase shifts and the fixed amplitude reflection coefficients of the corresponding IRS element. The progress on loss-less meta-surfaces allows to set $ \al_{mn}=1$, which ensures maximum reflection \cite{Bjoernson2019b}. Also, $ n\sim\mathcal{CN}\left(0,N_{0}\right) $ is the additive white Gaussian noise (AWGN) sample and $ x $ is the transmitted data symbol satisfying $ \EE[|x|^{2}] \!=\!P$, i.e., $ P $ denotes the average power of the symbol.

	\section{Performance Analysis}\label{Performance Analysis} 
	In this section, we present the derivation of the coverage probability when multiple IRSs are subject to correlated Rayleigh fading by means of the deterministic equivalent (DE) analysis, which provides tight approximation even for finite practical dimensions (see \cite{Papazafeiropoulos2015a,Papazafeiropoulos2016} and references therein). We focus on two interesting scenarios: a) a finite set of large IRSs ($ N \to \infty $); and b) a large number of IRSs ($ M \to \infty $) with finite size.
	
	
	\subsection{Main Results}	
	The coverage probability $ \bar{P}_{\mathrm{c}} $ is defined as the probability that the effective received SNR at the RX is larger than a given threshold $ T $, i.e., $ \bar{P}_{\mathrm{c}}=\mathrm{Pr}\left(\gamma>T\right) $,
	where 
	\begin{align}
		\gamma ={\gamma_{0}}{\bigg|\displaystyle\sum^{M}_{m=1}\bh_{m,1}^{\H}\bPhi_{m} \bh_{m,2}+h_{\mathrm{d}}\bigg|^{2}}{}\label{general}
	\end{align}
	is the received SNR in the general case with correlated fading that is obtained by using \eqref{received} and assuming coherent communication. Also, $ {\gamma_{0}}=P/N_{0} $ is the average transmit SNR. Under independent Rayleigh fading with instantaneous CSI, it is known that the phase configuration $ \phi_{m,n}= \arg\left(h_{\mathrm{d}}\right)-\arg\left(h_{mn,1}^{*}\right)\arg\left(h_{mn,2}\right)$ provides the optimal SNR \cite{Basar2019,Bjoernson2019b}. However, in the practical case of correlated Rayleigh fading, where only statistical CSI is available, we cannot directly obtain the solution of the phase shifts. Moreover, since correlated fading renders the exact derivation of the SNR intractable, we resort to the application of the DE analysis to derive the approximated SNR. \footnote{Note that the majority of works, deriving the coverage probability in IRS-assisted systems, result in approximations since they are based on CLT.} In Section \ref{Numerical}, we show that the corresponding $ P_{\mathrm{c}} $ provides a tight match with $ \mathrm{Pr}\left(	\gamma>T\right) $.
	\subsection{Finite $ M $ and large $ N $ analysis}
	In this part, we assume large IRSs, as usually considered in the existing literature to obtain the coverage probability, e.g., see \cite{Yang2020,Zhang2019a}.
	\begin{lemma}\label{DE1}
		The SNR of a SISO transmission, enabled by $ M $ large IRSs with correlated Rayleigh fading is approximated by
		\begin{align}
			\!{\gamma}
			&\asymp \gamma_{0}\!\left(B_{M}+|h_{\mathrm{d}}|^{2}\right)\!,\!\label{DE_SNR1}
		\end{align}
		where $ B_{M}\!=\!\sum^{M}_{m=1}\!\beta_{m}\!\tr\!\left( \bR_{m,1}\bPhi_{m}\bR_{m,2}\bPhi_{m}^{\H}\right) $ with $ \beta_{m}\!=\!\beta_{m,1} $ $\beta_{m,2} $.
	\end{lemma}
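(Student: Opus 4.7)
The plan is to expand the modulus-squared in~\eqref{general} and apply the Gaussian trace lemma to the correlated quadratic and bilinear forms that arise, thereby replacing the IRS-induced contributions by their deterministic equivalents.

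First, write $\gamma=\gamma_{0}|X+h_{\mathrm{d}}|^{2}$ with $X\triangleq\sum_{m=1}^{M}X_{m}$ and $X_{m}\triangleq\bh_{m,1}^{\H}\bPhi_{m}\bh_{m,2}$, and expand
\begin{align*}
|X+h_{\mathrm{d}}|^{2} = \sum_{m=1}^{M}|X_{m}|^{2} + \sum_{m\neq m'} X_{m}^{*}X_{m'} + 2\Re[h_{\mathrm{d}}^{*}X] + |h_{\mathrm{d}}|^{2}.
\end{align*}
The target equivalent $B_{M}+|h_{\mathrm{d}}|^{2}$ matches the first and last terms, so the two middle blocks must be shown to vanish in the $\asymp$ sense.

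For each diagonal summand, rewrite $|X_{m}|^{2}=\bh_{m,1}^{\H}(\bPhi_{m}\bh_{m,2}\bh_{m,2}^{\H}\bPhi_{m}^{\H})\bh_{m,1}$. Conditioning on $\bh_{m,2}$ and applying the standard Gaussian trace lemma to the quadratic form in the independent $\bh_{m,1}\sim\cC\cN(\b0,\beta_{m,1}\bR_{m,1})$ yields
\begin{align*}
|X_{m}|^{2} \asymp \beta_{m,1}\,\bh_{m,2}^{\H}\bPhi_{m}^{\H}\bR_{m,1}\bPhi_{m}\bh_{m,2}.
\end{align*}
A second application of the trace lemma to this quadratic form in $\bh_{m,2}\sim\cC\cN(\b0,\beta_{m,2}\bR_{m,2})$, combined with cyclic invariance of the trace, produces $|X_{m}|^{2}\asymp\beta_{m}\tr(\bR_{m,1}\bPhi_{m}\bR_{m,2}\bPhi_{m}^{\H})$, i.e.\ precisely the $m$th summand of $B_{M}$; summing across $m$ recovers $B_{M}$.

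For the remaining pieces, the pairs $\{\bh_{m,1},\bh_{m,2}\}$ are mutually independent zero-mean Gaussians and independent of $h_{\mathrm{d}}$. Hence every $X_{m}^{*}X_{m'}$ with $m\neq m'$ and every $h_{\mathrm{d}}^{*}X_{m}$ is a centered bilinear form in two independent complex Gaussian factors; conditioning on one factor reduces it to a centered Gaussian scalar whose contribution vanishes after a further trace-lemma/concentration step, as in any standard DE argument. Collecting all pieces and multiplying by $\gamma_{0}$ then gives the stated approximation.

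The main obstacle I anticipate is the conditioning step: once $\bh_{m,2}$ is fixed, the matrix sandwiched between the copies of $\bh_{m,1}$ is the rank-one $\bPhi_{m}\bh_{m,2}\bh_{m,2}^{\H}\bPhi_{m}^{\H}$, whose spectral norm is $\|\bh_{m,2}\|^{2}=\Theta(N)$ and therefore violates the bounded-norm hypothesis of the textbook trace lemma. One must instead work with the conditional mean of $|X_{m}|^{2}$ (which equals $\beta_{m,1}\bh_{m,2}^{\H}\bPhi_{m}^{\H}\bR_{m,1}\bPhi_{m}\bh_{m,2}$) and apply the trace lemma to the resulting quadratic form with matrix $\bPhi_{m}^{\H}\bR_{m,1}\bPhi_{m}$, whose spectral norm equals $\|\bR_{m,1}\|$ and is uniformly bounded in $N$ for the sinc kernel of~\eqref{eq:Element} (a principal submatrix of a bounded positive-definite integral operator), together with $\|\bPhi_{m}\|=1$ for lossless IRSs. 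The $\asymp$ notation is thus to be understood in the standard deterministic-equivalent sense employed throughout this line of work.
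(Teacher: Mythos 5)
Your proof follows essentially the same route as the paper's: expand the squared modulus, apply the Gaussian trace lemma (the paper cites Lem.~4 of its reference \cite{Papazafeiropoulos2015a}) to the diagonal quadratic terms, and let the two cross blocks vanish by the independence of the links and of the $M$ IRSs; the only cosmetic difference is that the paper makes the $1/N^{2}$ normalization explicit before invoking $\asymp$. Your closing remark about the rank-one conditioning issue and working with the conditional mean is a valid refinement of a step the paper passes over silently.
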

	\begin{proof}
		The proof starts by dividing \eqref{general} with $ \frac{1}{N^{2}} $. Then, we have
		\begin{align}
			\!\!\!	&\!\!	\frac{1}{N^{2}}{\gamma}=\gamma_{0}\frac{1}{N^{2}}\bigg(\displaystyle\sum^{M}_{m=1}\bigg|\bh_{m,1}^{\H}\bPhi_{m} \bh_{m,2}\bigg|^{2}+|h_{\mathrm{d}}|^{2}\nn\\&\!\!
			+2\mathrm{Re}\left(h_{\mathrm{d}}^{*}\displaystyle\sum^{M}_{m=1}\bh_{m,1}^{\H}\bPhi_{m} \bh_{m,2}\right)\nn\\
			\!\!\!	&\!\!+\displaystyle\sum^{M}_{m=1}\sum^{M}_{\substack{j=1 \\ n\ne m}}\bh_{m,1}^{\H}\bPhi_{m} \bh_{m,2}\bh_{n,2}^{\H}\bPhi_{n}^{\H} \bh_{n,1}\bigg)
			\label{DE_SNR10}\\
			\!\!\!
			&\!\!\asymp \gamma_{0}\frac{1}{N^{2}}\!\!\left(\displaystyle\sum^{M}_{m=1}\!\beta_{m,1}\beta_{m,2}\!\tr\!\left( \bR_{m,1}\bPhi_{m}\bR_{m,2}\bPhi_{m}^{\H}\right)\!+\!|h_{\mathrm{d}}|^{2}\!\!\right)\!\!,\label{DE_SNR3}
		\end{align}
		where, in \eqref{DE_SNR3}, we have used \cite[Lem. 4]{Papazafeiropoulos2015a} for the first, third, and the fourth terms. Especially, the third and fourth terms in \eqref{DE_SNR10} vanish as $ N\to \infty $ due to the independence among the $ M $ IRSs and between the two links, respectively. 
	\end{proof}
	\begin{proposition}\label{GeneralPDF}
		The coverage probability of a SISO transmission, enabled by $ M $ large IRSs with correlated Rayleigh fading for arbitrary phase shifts, is tightly approximated by
		\begin{align}
			\!	\!	P_{\mathrm{c}}=\left\{
			\begin{array}{ll}\mathrm{exp}\!\left(\!-\frac{1}{\beta_{\mathrm{d}}}\!\!\left(\!\frac{T}{\gamma_{0}}-B_{M}\!\right)\!\!\right)&B_{M}<\frac{T}{\gamma_{0}}\\
				1& B_{M}\ge \frac{T}{\gamma_{0}}
			\end{array} 
			\right. 		.\label{PC11}
		\end{align}
		
	\end{proposition}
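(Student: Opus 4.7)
The plan is to substitute the deterministic equivalent from Lemma \ref{DE1} into the coverage probability definition and then integrate out the only remaining random quantity, namely $|h_{\mathrm{d}}|^2$. After the DE substitution, $B_M$ is a fully deterministic function of the path-losses, covariance matrices, and the (arbitrary but fixed) reflect beamforming matrices $\bPhi_m$, so the randomness in the approximated SNR collapses onto the direct-link term.

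First, I would write
\begin{align}
P_{\mathrm{c}} \;=\; \Pr\!\left(\gamma_{0}\bigl(B_{M}+|h_{\mathrm{d}}|^{2}\bigr) > T\right) \;=\; \Pr\!\left(|h_{\mathrm{d}}|^{2} > \tfrac{T}{\gamma_{0}} - B_{M}\right),
\end{align}
where the first equality uses $\gamma \asymp \gamma_{0}(B_{M}+|h_{\mathrm{d}}|^{2})$ from Lemma \ref{DE1}. Since $h_{\mathrm{d}} \sim \mathcal{CN}(0,\beta_{\mathrm{d}})$, the squared magnitude $|h_{\mathrm{d}}|^{2}$ is exponentially distributed with mean $\beta_{\mathrm{d}}$, so its complementary CDF at any nonnegative argument $t$ is $\exp(-t/\beta_{\mathrm{d}})$.

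Next I would split on the sign of the threshold argument. If $B_{M} \ge T/\gamma_{0}$, then $T/\gamma_{0}-B_{M}\le 0$ and the event $\{|h_{\mathrm{d}}|^{2} > T/\gamma_{0}-B_{M}\}$ is almost sure, giving $P_{\mathrm{c}}=1$. Otherwise $T/\gamma_{0}-B_{M}>0$ lies in the support of $|h_{\mathrm{d}}|^{2}$ and the exponential tail yields
\begin{align}
P_{\mathrm{c}} \;=\; \exp\!\left(-\tfrac{1}{\beta_{\mathrm{d}}}\bigl(\tfrac{T}{\gamma_{0}}-B_{M}\bigr)\right),
\end{align}
matching \eqref{PC11}.

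There is essentially no obstacle beyond invoking Lemma \ref{DE1}; the only subtlety worth a sentence of justification is why the almost sure convergence $\gamma - \gamma_{0}(B_{M}+|h_{\mathrm{d}}|^{2}) \to 0$ lets us pass to the CDF of $|h_{\mathrm{d}}|^{2}$ (i.e.\ turn the $\asymp$ into a tight approximation of the probability rather than of the random variable). This follows from continuity of the exponential CDF together with the fact that $|h_{\mathrm{d}}|^{2}$ is independent of the cascaded terms that the DE replaces, so the convergence of the cascaded contribution is unaffected by conditioning on $h_{\mathrm{d}}$. This justifies the ``tightly approximated'' qualifier in the statement and completes the proof.
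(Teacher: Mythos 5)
Your proposal is correct and follows essentially the same route as the paper's own proof: substitute the deterministic equivalent from Lemma~\ref{DE1}, reduce the event to $|h_{\mathrm{d}}|^{2} > T/\gamma_{0}-B_{M}$, and evaluate the exponential tail with the case split on the sign of the threshold. Your closing remark on why the almost-sure convergence transfers to the probability is a small extra justification the paper leaves implicit, but it does not change the argument.
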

	\begin{proof}
		The coverage probability is written as
		\begin{align}
			&P_{\mathrm{c}}\!=\!\mathrm{Pr}\!\left(\!|h_{\mathrm{d}}|^{2}>\frac{T}{\gamma_{0}}\!-\!B_{M}\!\!\right)\!,\label{PC1}
		\end{align}
		where in \eqref{PC1}, we have used the SNR from \eqref{DE_SNR1}. Given that $ |h_{\mathrm{d}}|^{2} $ is exponentially distributed with rate parameter $ \beta_{\mathrm{d}} $, i.e., $ |h_{\mathrm{d}}|^{2}\sim \mathrm{Exp}(1/\beta_{\mathrm{d}}) $, we obtain the first branch in \eqref{PC11}, if $ B_{M}<\frac{T}{\gamma_{0}} $. Otherwise, $ P_{\mathrm{c}} =1$, and we conclude the proof.
	\end{proof}
	\begin{remark}
		If the aggregate contribution from the IRS-assisted channels is larger than $ T/\gamma_{0} $, no outage is detected during the communication. Also, the weaker the direct signal $ (\beta_{\mathrm{d}}\to 0) $, the less severe its impact is on $ 	P_{\mathrm{c}} $ and the influence of the cascaded channels becomes more pronounced. Moreover, when the path-losses of the cascaded links increase, i.e., $ \beta_{m} $ decreases, the coverage decreases.
	\end{remark}
	\begin{remark}
		From \eqref{PC11}, we observe that when the number of surfaces $ M $ increases, the coverage probability is improved. In addition, by increasing the size of each IRS in terms of $ N $, 	$ P_{\mathrm{c}} $ is enhanced. Hence, the use of more IRSs or larger IRSs is proved to be beneficial for coverage. 
	\end{remark}
	\begin{remark}
		Obviously, the coverage probability depends on the phase shifts, which could be optimized. However, in the case of uncorrelated fading, i.e., $ \bR_{m,1}=\bR_{m,2}=\Id_{N} $, $ P_{\mathrm{c}} $ becomes independent of the reflect beamforming matrices $ \bPhi_{m} $. In such a case, the phase shifts of the IRS cannot be optimized to improve the coverage. 
	\end{remark}
	
	\subsection{Large $ M $ and finite $ N $ analysis}
	The previous analysis does not allow to examine the coverage when $ M \to \infty $ but $ N $ is finite. To address this scenario, let $ 	\bg_{n,1}=\left[h_{1n,1}, \ldots, h_{Mn,1}\right]^{\T}\in \mathbb{C}^{M \times 1}$ denote the channel fading vector between the TX and the $ n $th elements of all IRSs (first link). Also, $ \bg_{n,2}=\left[h_{1n,2}, \ldots, h_{Mn,2}\right]^{\T} \in \mathbb{C}^{N \times 1}$ expresses the channel between the $ n $th elements of all IRSs and the RX (second link). Given that IRSs are reasonably far apart each other, we assume no correlation among them.\footnote{Note that not only this assumption is quite reasonable but the modeling of a potential correlation would require the conduct of measurement campaigns, which are currently unavailable.} In other words, we have $ \EE[\bh_{m,1} \bh_{l,1}^{\H}]=\EE[\bh_{m,2} \bh_{l,2}^{\H}]=\b0_{N}~\forall~m\ne l$ with $ m=1,\ldots,M $ and $ l=1,\ldots,M $. 
	
	Notably, a correlation appears between different channel vectors at each link. Specifically, regarding the first link, let $ \bQ_{np,1} $ describe the correlation between the $ n $th and $ p $th elements across all surfaces. It can be written as
	\begin{align}
		\bQ_{np,1}&=\EE[\bg_{n,1}\bg_{p,1}^{\H}] \nn\\
		&=\betav_{1}\diag\left(r_{np,1}^{1},\ldots,r_{np,M}^{1}\right),
	\end{align}
	where the matrix $ \betav_{1} =\diag\left(\beta_{1,1}, \ldots, \beta_{M,1}\right)\in \mathbb{C}^{M \times M}$ is diagonal with elements expressing the path-losses between the TX and the $ M $ surfaces. Note that $ \betav_{1} $ does not depend the index $ n $ but it includes the corresponding path-losses from all IRSs. The matrix $ \bQ_{np,1} $ is diagonal due to the independence among the IRSs. Also, $ r_{np,i}^{1}$ with $ i=1,\ldots,M $ expresses the $ (n,p) $th element of the correlation matrix of the $ i $th IRS of the first link, i.e., $ \bR_{i,1} $. Similarly, for the second link, we have 
	\begin{align}
		\bQ_{np,2}&=\betav_{2}\diag\left(r_{np,1}^{2},\ldots,r_{np,M}^{2}\right),
	\end{align}
	where $ \betav_{2}=\diag\left(\beta_{1,1}, \ldots, \beta_{M,1}\right) \in \mathbb{C}^{M \times M} $ is the diagonal matrix expressing the path-losses among the IRSs and the RX, and $ \bQ_{np,2} =\diag\left(r_{np,1}^{2}, \ldots, r_{np,M}^{2}\right)\in \mathbb{C}^{M \times M} $ describes the corresponding spatial correlation. Notably, in the case of independent Rayleigh fading, $ \bQ_{np,1}=\bQ_{np,2}=\bO $ for $ n\ne p $. As a result, the corresponding channel vectors of the first and and second links are formulated as
	\begin{align}
		\bg_{n,1}&\sim \mathcal{CN}\left(\b0, \betav_{1}\bQ_{nn,1}\right),\\
		\bg_{n,2}&\sim \mathcal{CN}\left(\b0, \betav_{2}\bQ_{nn,2}\right).
	\end{align}

	The SNR in \eqref{general} can be rewritten in terms of a summation over the number of elements of each IRS as
	\begin{align}
		\gamma ={\gamma_{0}}{\bigg|\displaystyle\sum^{N}_{n=1}\bg_{n,1}^{\H}\bPsi_{n} \bg_{n,2}+h_{\mathrm{d}}\bigg|^{2}}{},\label{generalM1}
	\end{align}
	where $ \bPsi_{n}=\mathrm{diag}\left(\exp\left(j \theta_{1n}\right), \ldots, \exp\left(j \theta_{Mn}\right)\right)\in \mathbb{C}^{M\times M} $.
	
	\begin{lemma}\label{GeneralPDFM}
		The SNR of a SISO transmission, enabled by a large number of finite size IRSs with correlated Rayleigh fading is approximated by
		\begin{align}
			{\gamma}
			&\asymp {\gamma_{0}}\left(B_{N}+|h_{\mathrm{d}}|^{2}\right)\!,\label{DE_SNR2}
		\end{align}
		where $ B_{N}= \sum^{N}_{n=1}\!\sum^{N}_{p=1}\!\tr\!\left(\bQ_{np,1}\bPsi_{n} \bQ_{np,2}\bPsi_{p}^{\H}\right)$. Note that $ \bPsi=\diag(\bPsi_{1},$ $\ldots, \bPsi_{N})\in \mathbb{C}^{MN \times MN} $, i.e., $ \bPsi $ is a block diagonal matrix.
	\end{lemma}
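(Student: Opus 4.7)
The plan is to mirror the derivation of Lemma~\ref{DE1}, but applied to the alternative element-wise representation \eqref{generalM1}, in which the summation runs over the $N$ elements of each IRS while the $M$-dimensional stacked vectors $\bg_{n,1},\bg_{n,2}$ play the role of the growing random vectors. The central idea is to compactify the sum $\sum_n \bg_{n,1}^{\H}\bPsi_n\bg_{n,2}$ into a single bilinear form on stacked vectors, so that a standard trace-type concentration lemma can be invoked in spite of the correlation between distinct elements within the same link.

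Concretely, I would first expand $|\sum_n\bg_{n,1}^{\H}\bPsi_n\bg_{n,2}+h_{\mathrm{d}}|^2$ into three pieces: a quadratic part $S=\sum_{n,p}\bg_{n,1}^{\H}\bPsi_n\bg_{n,2}\bg_{p,2}^{\H}\bPsi_p^{\H}\bg_{p,1}$, a cross term $2\Re(h_{\mathrm{d}}^{\ast}C)$ with $C=\sum_n\bg_{n,1}^{\H}\bPsi_n\bg_{n,2}$, and the direct contribution $|h_{\mathrm{d}}|^2$. I would then introduce the stacked vectors $\bar{\bg}_k=[\bg_{1,k}^{\T},\ldots,\bg_{N,k}^{\T}]^{\T}\in\mathbb{C}^{MN}$, $k\in\{1,2\}$, with block covariance matrices $\bar{\bQ}_k$ whose $(n,p)$-block equals $\bQ_{np,k}$. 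Because $\bPsi=\diag(\bPsi_1,\ldots,\bPsi_N)$ is block diagonal, we have $C=\bar{\bg}_1^{\H}\bPsi\bar{\bg}_2$ and $S=|\bar{\bg}_1^{\H}\bPsi\bar{\bg}_2|^2$. Conditioning on $\bar{\bg}_1$ and using the independence of the two links gives $\EE[\bar{\bg}_2\bar{\bg}_2^{\H}]=\bar{\bQ}_2$, so $S$ collapses to the Hermitian quadratic form $\bar{\bg}_1^{\H}\bPsi\bar{\bQ}_2\bPsi^{\H}\bar{\bg}_1$; a trace lemma for Gaussian quadratic forms of growing dimension $M$ then yields $S\asymp\tr(\bar{\bQ}_1\bPsi\bar{\bQ}_2\bPsi^{\H})$. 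Expanding this trace in block form, and using $\bQ_{pn,k}=\bQ_{np,k}$ (the $\bQ_{np,k}$ are diagonal with real entries), reproduces $B_N=\sum_{n,p}\tr(\bQ_{np,1}\bPsi_n\bQ_{np,2}\bPsi_p^{\H})$. The cross term $h_{\mathrm{d}}^{\ast}C$ has zero mean, and by Gaussian concentration on $\bar{\bg}_1^{\H}\bPsi\bar{\bg}_2$ it is of lower order than the dominant quadratic contribution $S$, so it drops out in the DE sense; assembling the three pieces gives \eqref{DE_SNR2}.

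The principal obstacle is the within-link correlation $\bQ_{np,k}\ne\b0$ for $n\ne p$: unlike in Lemma~\ref{DE1}, where independence across IRSs forced all $m\ne l$ cross terms to vanish in expectation, here these terms contribute non-trivially and must be retained inside $B_N$. The stacking trick is exactly what handles this cleanly, since it recasts the awkward double sum into a single trace over $MN\times MN$ block matrices, on which the standard DE/trace-lemma machinery applies in one shot rather than term by term.
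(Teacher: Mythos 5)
Your proposal is correct and follows essentially the same route as the paper's Appendix C: the same expansion of the squared modulus, the same vanishing of the $h_{\mathrm{d}}$ cross term by independence of the two links, and the same trace-lemma (deterministic-equivalent) step applied to the quadratic terms. The only difference is organizational --- you stack the $\bg_{n,k}$ into $MN$-dimensional vectors and evaluate a single block trace, whereas the paper applies the lemma to the $n=p$ and $n\neq p$ terms separately and then combines the resulting traces into the double sum $B_{N}$.
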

	\begin{proof}
		By dividing $ 	\gamma $ with $ \frac{1}{M^{2}} $, we have
		\begin{align}
			&	\!\!\!\frac{1}{M^{2}}	\gamma\!=\!{\gamma_{0}}\frac{1}{M^{2}}\!\bigg(\displaystyle\!\sum^{N}_{n=1}\!\bigg|\bg_{n,1}^{\H}\bPsi_{n} \bg_{n,2}\bigg|^{2}\!\!\!+\!2\mathrm{Re}\!\left(\!\!h_{\mathrm{d}}^{*}\!\displaystyle\sum^{N}_{n=1}\!\bg_{n,1}^{\H}\bPsi_{n} \bg_{n,2}\!\!\right)\nn\\
			&+|h_{\mathrm{d}}|^{2}+\displaystyle\sum^{N}_{n=1}\sum^{N}_{\substack{p=1 \\ p\ne n}}\bg_{n,1}^{\H}\bPsi_{n} \bg_{n,2}\bg_{p,2}^{\H}\bPsi_{p}^{\H}\bg_{p,1}\bigg)\label{generalM2}\\
			&={\gamma_{0}}\frac{1}{M^{2}}\bigg(\displaystyle\sum^{N}_{n=1}\sum^{N}_{p=1}\tr\left(\bQ_{np,1}\bPsi_{n} \bQ_{np,2}\bPsi_{p}^{\H}\right)+|h_{\mathrm{d}}|^{2}\bigg)\label{generalM4},
		\end{align}
		where the second term in \eqref{generalM2} vanishes as $ M\to \infty $ due to the independence between the two links. Application of \cite[Lem. 4]{Papazafeiropoulos2015a} at the first and fourth terms in \eqref{generalM2} gives 
		\eqref{generalM4} after a direct combination of the two resultant traces.
	\end{proof}
	\begin{proposition}\label{GeneralPDF1}
		The coverage probability of a SISO transmission, enabled by a large number of finite size IRSs with correlated Rayleigh fading for arbitrary phase shifts, is tightly approximated by
		\begin{align}
			\!	\!	P_{\mathrm{c}}=\left\{
			\begin{array}{ll}\mathrm{exp}\!\left(\!-\frac{1}{\beta_{\mathrm{d}}}\!\!\left(\!\frac{T}{\gamma_{0}}-B_{N}\!\right)\!\!\right)&B_{N}<\frac{T}{\gamma_{0}}\\
				1& B_{N}\ge \frac{T}{\gamma_{0}}
			\end{array} 
			\right. 		
			\label{PC13}\!.
		\end{align}
	\end{proposition}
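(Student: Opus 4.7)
The plan is to mirror the argument already used for Proposition \ref{GeneralPDF}, substituting the deterministic equivalent provided by Lemma \ref{GeneralPDFM} for the one used in Lemma \ref{DE1}. First I would invoke Lemma \ref{GeneralPDFM} to replace $\gamma$ by its deterministic equivalent $\gamma_{0}(B_{N}+|h_{\mathrm{d}}|^{2})$, noting that the approximation error vanishes almost surely as $M\to\infty$. Since the quantity $B_{N}$ depends only on the correlation matrices $\bQ_{np,k}$, the path-loss diagonals $\betav_{k}$, and the phase-shift matrices $\bPsi_{n}$, it is deterministic under statistical CSI, so the remaining randomness in the SNR is carried entirely by $|h_{\mathrm{d}}|^{2}$.

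Next I would write
\begin{align}
P_{\mathrm{c}}\;=\;\Pr\!\left(\gamma_{0}\bigl(B_{N}+|h_{\mathrm{d}}|^{2}\bigr)>T\right)\;=\;\Pr\!\left(|h_{\mathrm{d}}|^{2}>\tfrac{T}{\gamma_{0}}-B_{N}\right).
\end{align}
Because $h_{\mathrm{d}}\sim\mathcal{CN}(0,\beta_{\mathrm{d}})$, the squared magnitude $|h_{\mathrm{d}}|^{2}$ is exponentially distributed with mean $\beta_{\mathrm{d}}$, so its complementary CDF is $\exp(-x/\beta_{\mathrm{d}})$ on $x\geq 0$ and equals $1$ for $x<0$. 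Splitting into the two cases $B_{N}<T/\gamma_{0}$ and $B_{N}\geq T/\gamma_{0}$ immediately yields the piecewise expression in \eqref{PC13}.

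Finally I would comment on why no separate argument is needed for the behaviour of $B_{N}$ as $M$ grows: since $B_{N}$ scales with the traces of the correlation and path-loss structure across all $M$ surfaces, it is bounded for any finite $M$ and the above derivation is valid at each stage of the limit; the deterministic equivalence established in Lemma \ref{GeneralPDFM} guarantees the tightness of the resulting approximation. The only slightly delicate point — and the one I would expect to have to justify most carefully — is interchanging the limit (in the almost-sure convergence $\gamma-\gamma_{0}(B_{N}+|h_{\mathrm{d}}|^{2})\to 0$) with the probability evaluation defining $P_{\mathrm{c}}$; this is handled by the standard fact that almost-sure convergence implies convergence in distribution, so that the CDF of $\gamma$ and that of its deterministic equivalent agree in the large-$M$ regime at every continuity point, which covers every $T>0$ with $T/\gamma_{0}\neq B_{N}$.
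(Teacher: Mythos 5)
Your proposal is correct and follows essentially the same route as the paper, which simply notes that the proof of Proposition~\ref{GeneralPDF1} mirrors that of Proposition~\ref{GeneralPDF}: substitute the deterministic equivalent from Lemma~\ref{GeneralPDFM} and apply the complementary CDF of the exponential random variable $|h_{\mathrm{d}}|^{2}$. Your added remark on interchanging the almost-sure limit with the probability evaluation is a careful touch the paper leaves implicit, but it does not change the argument.
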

	\begin{proof}
		The proof follows similar lines with the proof of Proposition \ref{GeneralPDF}.
	\end{proof}
	\begin{remark}
		We observe that \eqref{PC13} has a similar expression with \eqref{PC11}. However, the main characteristic of \eqref{PC13} is that it is written in terms of a double summation expressing the correlation among the IRS elements instead of one summation in \eqref{PC11}. Notably, if the correlation matrix is identical across different IRSs, $ 	\bQ_{np,1} $ and $ 	\bQ_{np,2} $ are scaled identity matrices but the coverage will always be dependent on the phase shifts due to the contributions from the off-diagonal terms of the IRSs. Furthermore, we observe a similar dependence from the path-loss of the direct signal and the number of surfaces and their elements, i.e., their increase improves the coverage. Also, under uncorrelated Rayleigh fading conditions, $ P_{\mathrm{c}} $ does not depend on the phases, and thus, cannot be optimized.
	\end{remark}
	
	\subsection{Reflecting beamforming optimization}
	Both Propositions \eqref{GeneralPDF} and \eqref{GeneralPDF1} are described by a similar expression in terms of a trace that includes the reflecting beamforming matrices. 
	Hence, their optimization follows similar steps up to a point. Specifically, to achieve maximum $ P_{\mathrm{c}} $, we formulate the optimization problem, relying on the common assumption of infinite resolution phase shifters, as
	\begin{align}\begin{split}
			&\!\!\!(\mathcal{P}1)~\max_{\bPhi} ~~	P_{\mathrm{c}}\\
			&~~~~~~\mathrm{s.t}~|\phi_{mn}|\!=\!1,~~ m\!=\!1,\dots,M~\mathrm{and}~n\!=\!1,\dots,N,
		\end{split}\label{Maximization} 
	\end{align}
	where $ 	P_{\mathrm{c}} $ is given by \eqref{PC11} or \eqref{PC13} and $ \phi_{mn}= \exp\left(j \theta_{mn}\right) $.

	The optimization problem $ 	(\mathcal{P}1) $ is non-convex with respect to the reflect beamforming matrix while having a unit-modulus constraint regarding $ \phi_{mn} $. Use of projected gradient 	ascent until converging to a 	stationary point can provide a direct solution. In particular, since each surface has a similar solution, we focus on the $ m $th IRS. At the $ i $th step, we assume the vectors $ \bs_{m,i} =[\phi_{m1}^{i}, \ldots, \phi_{mN}^{i}]^{\T}$, which include 	the phases at this step. The next iteration increases $ P_{\mathrm{c}} $ until its convergence by projecting the solution onto the closest feasible point based on $ \min_{|\phi_{mn} |=1, n=1,\ldots,N}\|\bs_{m}-\tilde{\bs}_{m}\|^{2} $ satisfying the unit-modulus constraint concerning $ \phi_{mn} $ with
	\begin{align}
		\tilde{\bs}_{m,i+1}&=\bs_{m,i}+\mu \bq_{m.i},\label{sol1}\\
		\bs_{m,i+1}&=\exp\left(j \arg \left(\tilde{\bs}_{m,i+1}\right)\right).\label{sol2}
	\end{align}
	Note that $ \mu $ expresses the step size computed at each iteration by means of the backtracking line search \cite{Boyd2004} while $ \bq_{m,i} $ denotes the adopted ascent direction at step $ i $ with $ \bq_{m,i}= \pdv{	P_{\mathrm{c}}}{\bs_{m,i}^{*}} $, obtained by Lemma \ref{deriv1} below. Algorithm \ref{Algoa1} provides an outline of the proposed algorithm for Proposition \ref{GeneralPDF} and \ref{GeneralPDF1} by setting $ \tilde{\bPhi}= \bPhi_{m}$ and $ \tilde{\bPhi}= \bPsi_{n} $, respectively.
	\begin{algorithm}
		\caption{Projected Gradient Ascent Algorithm for the IRS Design}
		1.				 \textbf{Initialisation}: $ \bs_{m,0} =\exp\left(j\pi/2\right)\one_{N}$, $ \tilde{\bPhi}_{0}=\diag\left(\bs_{m,0}\right) $, $ P_{\mathrm{c}}^{0}=f\left(\tilde{\bPhi}_{0}\right) $ given by \eqref{Maximization}; $ \epsilon>0 $\\
		2. \textbf{Iteration} $ i $: \textbf{for} $ i=0,1,\dots, $ do\\
		3. $[\bq_{m,i}]_{n}= \pdv{P_{\mathrm{c}}}{\bs_{m,i}^{*}}$, where $\pdv{	P_{\mathrm{c}}}{\bs_{m,i}^{*}} $ is given by Lemma \ref{deriv1};\\
		4. \textbf{Find} $ \mu $ by backtrack line search$( f\left(\tilde{\bPhi}_{0}\right),\bq_{m,i},\bs_{m,i})$ \cite{Boyd2004};\\
		5. $ \tilde{\bs}_{m,i+1}=\bs_{m,i}+\mu \bq_{m,i} $;\\
		6. 	$ \bs_{m,i+1}=\exp\left(j \arg \left(\tilde{\bs}_{m,i+1}\right)\right) $; $ \tilde{\bPhi}_{i+1}= \diag\left(\bs_{m,i+1}\right) $;\\
		7. $P_{\mathrm{c}}^{i+1}=f\left(\tilde{\bPhi}_{i+1}\right) $;\\
		8. \textbf{Until} $ \|P_{\mathrm{c}}^{i+1}- P_{\mathrm{c}}^{i}\|^{2} <\epsilon$; \textbf{Obtain} $ \tilde{\bPhi}^{*}=\tilde{\bPhi}_{i+1}$;\\
		9. \textbf{end for}\label{Algoa1}
	\end{algorithm}
	\begin{lemma}\label{deriv1}
		The derivative of the coverage probability with respect to $ \bs_{m,i}^{*}$ is given by
		\begin{align}
			\!\pdv{	{P_{\mathrm{c}}}}{\bs_{m,i}^{*}}\!\!=\!\frac{\beta_{m}}{\beta_{\mathrm{d}}}\!\left\{\!\!\!\!
			\begin{array}{ll}	{P_{\mathrm{c}}}\,\diag\left(\bR_{m,1}\bPhi_{m} \bR_{m,2}\right),&\mathrm{Prop.}~\ref{GeneralPDF}\\
				{P_{\mathrm{c}}}			\displaystyle\!\sum^{N}_{p=1} \bc_{p},& \mathrm{Prop.}~\ref{GeneralPDF1}
			\end{array} 
			\right.\!\! 
		\end{align}
		when $ B_{i}<\frac{T}{\gamma_{0}},~i=M,N $. Otherwise, it is zero. Note that $\bc_{p}= \left[r_{1p,m}^{1} r_{1p,m}^{2}\phi_{m,1n}, \ldots,r_{Np,m}^{1} r_{Np,m}^{2}\phi_{m,1N} \right]^{\T} $.
	\end{lemma}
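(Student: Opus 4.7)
The plan is to apply the chain rule to the closed-form expressions of $P_\mathrm{c}$ in Propositions~\ref{GeneralPDF} and \ref{GeneralPDF1}. First I dispense with the trivial branch: when $B_M\ge T/\gamma_0$ (respectively $B_N\ge T/\gamma_0$) the coverage probability is constant at $1$, so $\pdv{P_\mathrm{c}}{\bs_{m,i}^{*}}=\mathbf{0}$, matching the ``otherwise'' clause. On the active branch, $P_\mathrm{c}=\exp(-(T/\gamma_0-B_\star)/\beta_\mathrm{d})$ with $\star\in\{M,N\}$, so
\[
\pdv{P_\mathrm{c}}{\bs_{m,i}^{*}}=\frac{P_\mathrm{c}}{\beta_\mathrm{d}}\pdv{B_\star}{\bs_{m,i}^{*}},
\]
and the remaining work is to compute $\partial B_M/\partial\bs_{m,i}^{*}$ and $\partial B_N/\partial\bs_{m,i}^{*}$.

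For Proposition~\ref{GeneralPDF} I note that only the $m$th summand of $B_M$ depends on $\bs_{m,i}$. Using $\bPhi_m=\diag(\bs_{m,i})$, I would rewrite the trace as the Hermitian quadratic form
\[
\tr(\bR_{m,1}\bPhi_m\bR_{m,2}\bPhi_m^{\H})=\bs_{m,i}^{\H}\bigl(\bR_{m,1}\odot\bR_{m,2}^{\T}\bigr)\bs_{m,i},
\]
and apply the Wirtinger identity $\partial(\bs^{\H}\bA\bs)/\partial\bs^{*}=\bA\bs$. An entrywise check, $\sum_{k}[\bR_{m,1}]_{n,k}[\bR_{m,2}]_{k,n}\phi_{m,k}=[\bR_{m,1}\bPhi_m\bR_{m,2}]_{n,n}$, repackages $(\bR_{m,1}\odot\bR_{m,2}^{\T})\bs_{m,i}$ as $\diag(\bR_{m,1}\bPhi_m\bR_{m,2})$, and multiplication by $\beta_m$ produces the first branch of the claim.

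For Proposition~\ref{GeneralPDF1} I exploit that $\bQ_{np,1}$, $\bQ_{np,2}$, $\bPsi_n$, and $\bPsi_p$ are all diagonal, so each trace collapses to a single sum
\[
\tr(\bQ_{np,1}\bPsi_n\bQ_{np,2}\bPsi_p^{\H})=\sum_{m'=1}^{M}\beta_{m'} r_{np,m'}^{1}r_{np,m'}^{2}\phi_{m'n}\phi_{m'p}^{*},
\]
turning $B_N$ into a triple sum over $(n,p,m')$. Differentiating with respect to $\phi_{m,k}^{*}$ isolates the terms with $m'=m$ and $p=k$, leaving $\beta_m\sum_{n'}r_{n'k,m}^{1}r_{n'k,m}^{2}\phi_{m,n'}$; rearranging this into the vectors $\bc_p$ defined in the statement yields $\beta_m\sum_{p=1}^{N}\bc_p$, and multiplying by $P_\mathrm{c}/\beta_\mathrm{d}$ closes the second branch.

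The main obstacle is pure bookkeeping, most acute in the Proposition~\ref{GeneralPDF1} branch: the phases $\phi_{m,1},\dots,\phi_{m,N}$ of IRS~$m$ are scattered as the $(m,m)$-entries of the $N$ distinct matrices $\bPsi_1,\dots,\bPsi_N$, so differentiating with respect to a single component of $\bs_{m,i}^{*}$ touches every $\bPsi_p$ in the double sum over $(n,p)$. Once that correspondence is made explicit, the remainder is a routine Wirtinger calculation, and the symmetry $r_{np,m}^{k}=r_{pn,m}^{k}$ of the real correlation matrices from \eqref{eq:Element} reconciles the apparent index asymmetry of the stated $\bc_p$ with the derivative above.
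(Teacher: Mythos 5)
Your overall route is the same as the paper's: chain rule through $B_M$ (resp.\ $B_N$), the observation that $\pdv{P_{\mathrm{c}}}{B_i}=P_{\mathrm{c}}/\beta_{\mathrm{d}}$ on the active branch and zero otherwise, and then a Wirtinger differentiation of the trace. For the Proposition~\ref{GeneralPDF} branch your quadratic-form packaging $\tr(\bR_{m,1}\bPhi_m\bR_{m,2}\bPhi_m^{\H})=\bs_{m,i}^{\H}\left(\bR_{m,1}\odot\bR_{m,2}^{\T}\right)\bs_{m,i}$ is just another way of writing the paper's identity $\tr\left(\bA \diag(\bs_m^{*})\right)=\left(\diag(\bA)\right)^{\T}\bs_m^{*}$, and your entrywise check correctly recovers $\diag\left(\bR_{m,1}\bPhi_m\bR_{m,2}\right)$; that part is fine.

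The gap is the last step of the Proposition~\ref{GeneralPDF1} branch. Your intermediate derivative is correct: the $k$th component of $\pdv{B_N}{\bs_{m,i}^{*}}$ is $\beta_m\sum_{n}r^{1}_{nk,m}r^{2}_{nk,m}\phi_{m,n}$, i.e.\ the phase subscript runs with the summation index. But the stated $\sum_{p}\bc_p$ has $k$th component $\bigl(\sum_{p} r^{1}_{kp,m}r^{2}_{kp,m}\bigr)\phi_{m,k}$, with the phase frozen at the component index $k$. The symmetry $r^{i}_{np,m}=r^{i}_{pn,m}$ only lets you swap the two correlation indices; it cannot move the phase subscript from the summed index to the fixed one, so the two expressions are not equal for generic phase configurations. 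Asserting that symmetry ``reconciles'' them is therefore not a routine rearrangement but the one place where your argument fails: either the lemma's $\bc_p$ carries a typo (its $k$th entry should read $r^{1}_{kp,m}r^{2}_{kp,m}\phi_{m,p}$, which does match your derivative after using symmetry) or the displayed formula is wrong as stated, and you needed to say which. For what it is worth, the paper's own appendix is dimensionally inconsistent at exactly this spot --- it differentiates the $M$-dimensional $\diag\left(\bQ_{np,1}\bPsi_n\bQ_{np,2}\right)$ against the $N$-dimensional $\bs_{m,i}^{*}$ --- so your explicit triple-sum computation is the more trustworthy derivation; you should have carried it through to a corrected $\bc_p$ rather than papering over the index mismatch.
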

	\begin{proof}
		See Appendix~\ref{ArbitraryPDFProof2}.
	\end{proof}

	\section{Numerical Results}\label{Numerical} 
	Relying on a Cartesian coordinate system, we consider a cell, where the TX is located at the origin and the RX at $ \left(60, 0\right) $. Also, we assume a number of $ M=15 $ obstacles, being uniformly distributed between them, and at each obstacle location, we place an IRS to improve coverage. Each IRS is deployed with $ N=225 $ elements unless otherwise specified. The size of each IRS element is given by $ d_{\mathrm{H}}\!=\!d_{\mathrm{V}}\!=\!\lambda/2 $ \cite{Bjoernson2020}. The spatial correlation matrix is given by \eqref{eq:Element}. The large-scale fading coefficients between the TX and the RX are given by
	$\beta_{m,i} = G_t + G_r + 10 \nu_e \log_{10} (d_i/1\mathrm{m}) - 27.5$,
	where $i \in \{ 1, 2\}$ while $ \beta_{\mathrm{d}} $ is given similarly. Also, the path-loss exponents are $\nu_1 =\nu_2= 2$ and $ \nu_{\mathrm{d}}=3.5 $. \footnote{The path-loss exponent mainly depends on the properties of the obstacles and their densities, i.e., the exponent is high when the density of obstacles is high. Since, in IRS-aided systems, the IRS may be deployed in a proper place with a less density of obstacles, it is reasonable to set the path-loss exponent related to the IRS channels as small, while setting that of the direct channel as high.} Moreover, we have $G_t = 3.2$~dBi and $G_r = 1.3$~dBi. The system bandwidth is $10$~MHz, the carrier frequency is $3$~GHz, and the noise variance is $-94$~dBm with the noise figure being~$10$~dB. Note that the transmitter power is $10$~dBm. Monte-Carlo (MC) simulations verify the analytical results and corroborate that the DE analysis provides tight approximations as has been already shown in the literature, e.g., see \cite{Papazafeiropoulos2015a,Papazafeiropoulos2016}.
	
	Fig. \ref{Fig1}.(a) shows the coverage probability versus the target rate in the scenario described by Proposition \ref{GeneralPDF}.\footnote{The theoretical analysis of this proposition relies on finite $N $ but we consider $ N\ge 225 $, which is common for practical IRS implementations.} By increasing the number of elements in each IRS, $ P_{\mathrm{c}} $ increases. Also, the addition of more IRSs (an increase of $ M $) through $ B_{M} $ contributes to the observation of less outage during communication. Moreover, if no correlation is assumed, $ P_{\mathrm{c}} $ is lower because it becomes independent of the reflect beamforming matrix and cannot be optimized. Notably, if the impact from the direct signal through $ \beta_{\mathrm{d}} $ becomes weaker, the coverage decreases, and the variation regarding the number of elements as well as the correlated Rayleigh fading have a greater impact on $ P_{\mathrm{c}} $ since the relevant gaps are larger.
	\begin{figure*}[t]
		\begin{minipage}{0.33\textwidth}
			\centering
			\includegraphics[trim=0cm -0.20cm 0cm 0.2cm, clip=true, width=2.2in]{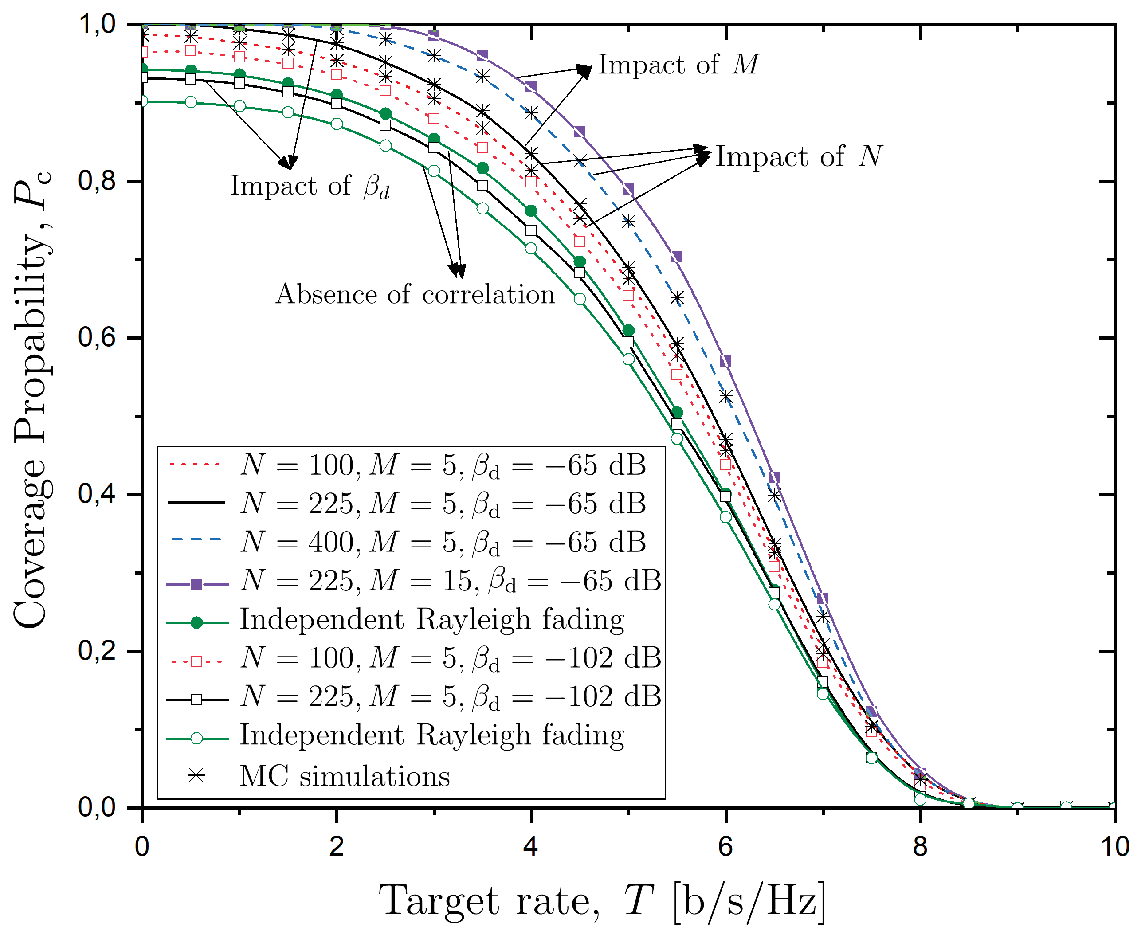} \vspace*{-0.2cm}
			\\ $(a)$
			\vspace*{-0.2cm}
		\end{minipage}
		\begin{minipage}{0.33\textwidth}
			\centering
			\includegraphics[trim=0cm -0.20cm 0cm 0.2cm, clip=true, width=2.2in]{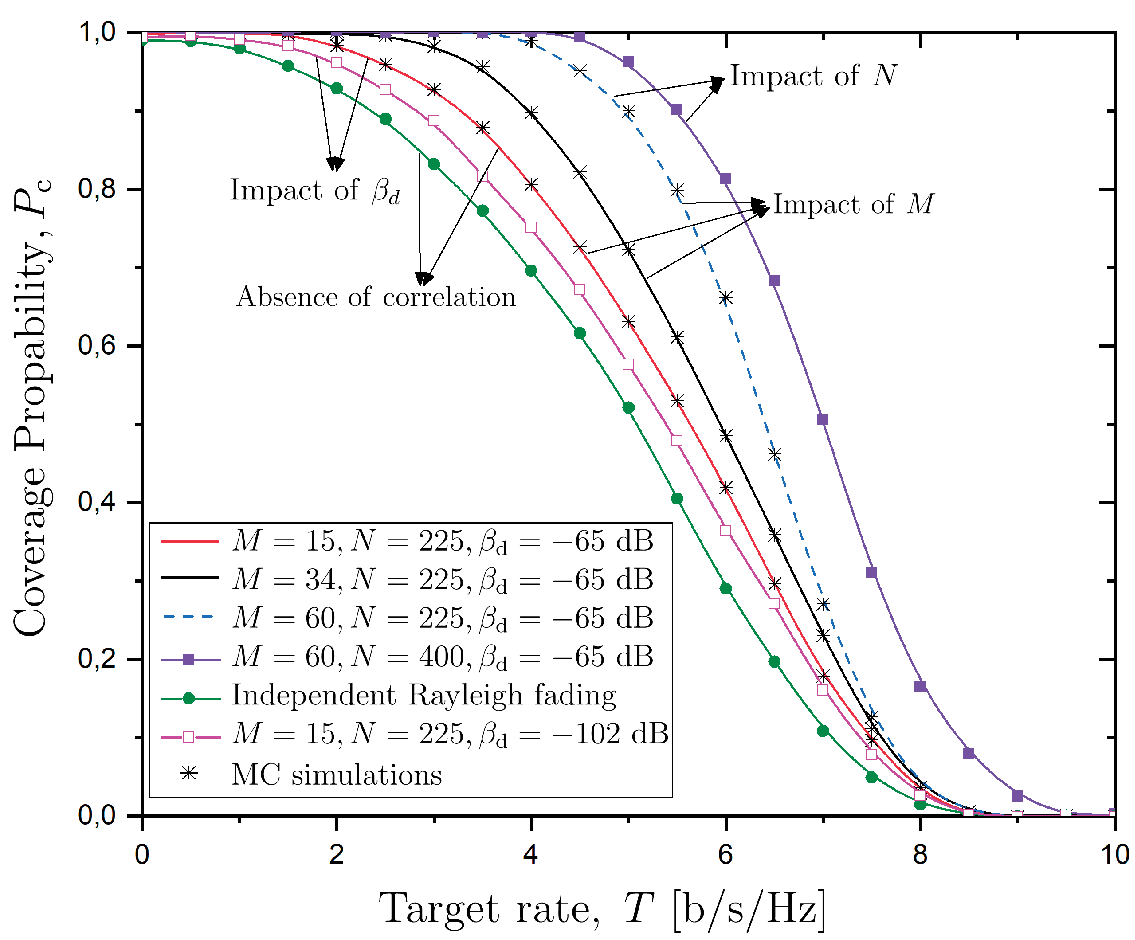} \vspace*{-0.2cm}
			\\$(b)$
			\vspace*{-0.2cm}
		\end{minipage}
		\begin{minipage}{0.33\textwidth}
			\centering
			\includegraphics[trim=0cm -0.20cm 0cm 0.2cm, clip=true, width=2.2in]{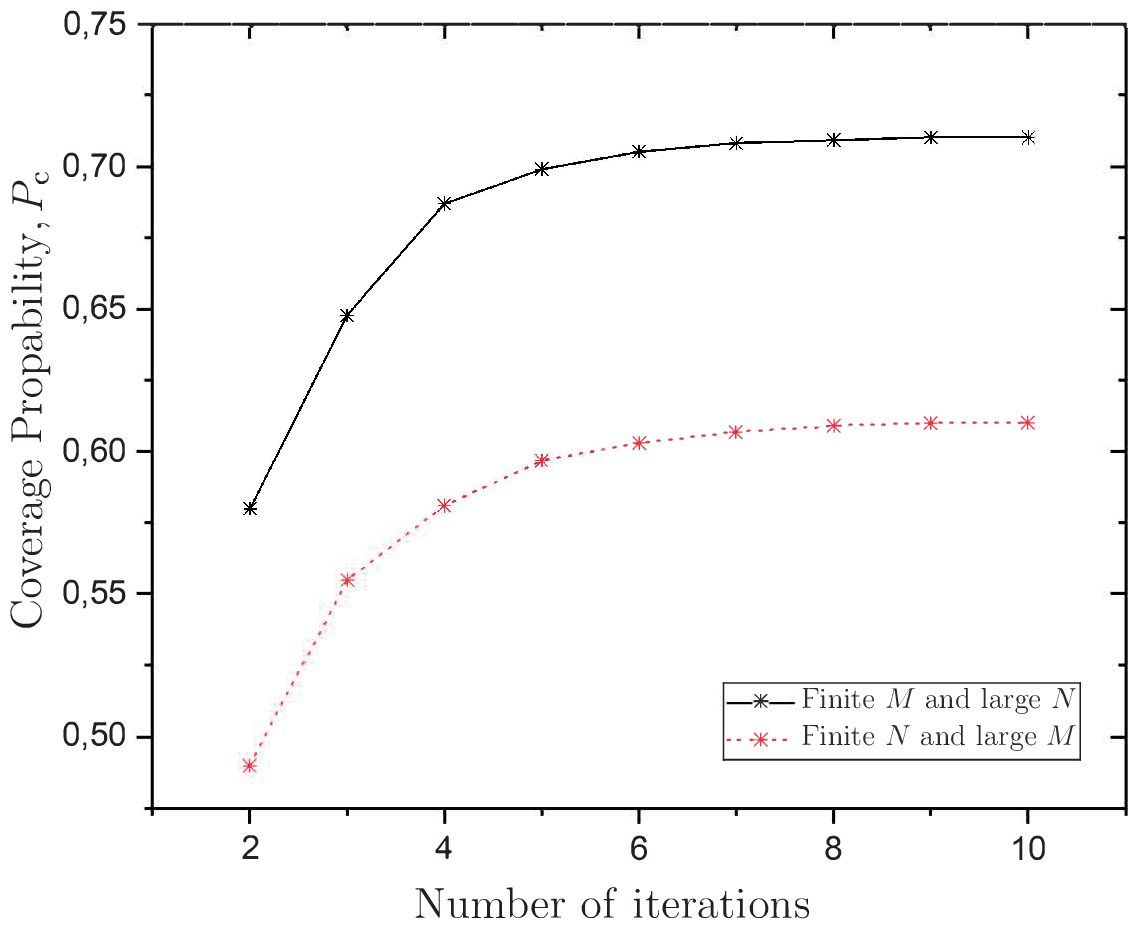}\vspace*{-0.2cm}\\$(c)$
			\vspace*{-0.2cm}
		\end{minipage}
		\caption{Coverage probability versus the target rate $ T $ (analytical results and MC simulations) of a SISO system with correlated Rayleigh fading assisted by $(a)$ $ M $ IRSs each having a large number of elements ($ N \to \infty $) and $(b)$ a large number of IRSs ($ M\to \infty $) each having $ N $ elements; $(c)$ Coverage probability versus the number of iterations for both cases.}
		\label{Fig1}
		\vspace{-0.65cm}
	\end{figure*}
	
	In Fig. \ref{Fig1}.(b), we depict the coverage probability versus the target rate by accounting for a large number of surfaces, i.e., we shed light on the setting referring to Proposition \ref{GeneralPDF1}. Herein, we notice that the coverage is improved as the number of IRSs increases. Similarly, if we increase the number of elements per IRS, we observe a further improvement. Notably, a comparison between Figs. \ref{Fig1}.(a) and \ref{Fig1}.(b) reveal that an analogous increase concerning the number of IRSs results in a larger improvement of the coverage compared to increasing the number of elements per IRS. Specifically, in Fig. \ref{Fig1}.(b), it is shown that when $ M $ increases from $15 $ to $ 34 $ ($ 225\% $ increase), $ P_{\mathrm{c}} $ starts decreasing from full coverage when $ T=1.47~\mathrm{dB}$ and $ 2.2~\mathrm{dB}$, respectively. On the other hand, in Fig. \ref{Fig1}.(a), we observe that for a similar increase concerning the number of elements per IRS, i.e., when $ N $ increases from $ 100 $ to $ 225 $, the coverage is much lower.
	
	In Fig. \ref{Fig1}.(c), we show the performance (convergence behaviour) achieved by Algorithm 1 versus the number of iterations for both cases studied in this work, i.e, i) finite $ M $ and large $ N $ (black solid line) and ii) finite $ N $ and large $ M $ (red dotted line). We observe that the coverage probability, provided by this algorithm, takes quickly its optimal value (converges) with the number of iterations in both cases (in less than $ 10 $ iteration), which proves the robustness of this method. 
	\section{Conclusion} \label{Conclusion} 
	In this paper, we derived the coverage probability of a SISO system assisted with multiple IRSs under the unavoidable conditions of correlated Rayleigh fading. We considered two distinct scenarios: a finite multitude of large IRSs and a large number of finite IRSs. Especially, we managed to derive and optimize the coverage probability with respect to the phase shifts of the IRS elements in both cases. The results enabled us to show that it is more beneficial to increase the number of IRSs instead of increasing their elements. Future works on coverage of distributed IRSs should take into account the design of multi-user transmission with multiple antennas, and possibly, under Rician fading conditions.
	
	\begin{appendices}
		\section{Proof of Lemma~\ref{deriv1}}\label{ArbitraryPDFProof2}
		First, by applying the chain rule, we obtain 
		\begin{align}
			\pdv{	{P_{\mathrm{c}}}}{\bs_{m,i}^{*}}&=\pdv{	{P_{\mathrm{c}}}}{B_{i}}\pdv{	B_{i}}{\bs_{m,i}^{*}},\label{deriv2}
		\end{align}
		where $ i=M$ and $ i=N$ correspond to Propositions \ref{GeneralPDF} and Propositions \ref{GeneralPDF1}, respectively.
		The first-order derivative in \eqref{deriv2} becomes
		\begin{align}
			\pdv{	{P_{\mathrm{c}}}}{	B_{i}}&=\frac{1}{\beta_{\mathrm{d}}}	{P_{\mathrm{c}}}.\label{deriv3}
		\end{align}
		By taking into account the expression of $ 	B_{M}$, we have
		\begin{align}
			\pdv{	B_{M}}{\bs_{m,i}^{*}}&= \beta_{m}\pdv{\left(\diag\left( \bR_{m,1}\bPhi_{m}\bR_{m,2}\right)\right)^{\T}\bs_{m,i}^{*}}{\bs_{m,i}^{*}}\\
			&= \beta_{m}\diag\left(\bR_{m,1}\bPhi_{m} \bR_{m,2}\right),\label{deriv4}
		\end{align}
		where we have used the property $ \tr\left(\bA \diag(\bs_{m})\right)=\left(\diag(A)\right)^{\T}\bs_{m} $.
		In the case of $ \pdv{	B_{N}}{\bs_{m,i}^{*}} $, we obtain
		\begin{align}
			\pdv{	B_{N}}{\bs_{m,i}^{*}}&=	\displaystyle\!\sum^{N}_{p=1}\!\pdv{\left(\diag\left(\bQ_{np,1}\bPsi_{n} \bQ_{np,2}\right)\right)^{\T}\bs_{m,i}^{*}}{\bs_{m,i}^{*}}\\
			&=	\displaystyle\!\sum^{N}_{p=1}\diag \left( \bQ_{np,1}\bPsi_{n} \bQ_{np,2}\right)\nn\\
			&=\beta_{m}	\displaystyle\!\sum^{N}_{p=1}\begin{bmatrix}r_{1p,m}^{1} r_{1p,m}^{2}\phi_{m,1} \\\vdots \\ r_{Np,m}^{1} r_{Np,m}^{2}\phi_{m,N}\end{bmatrix}\!.\label{deriv5}
		\end{align}

		By substituting \eqref{deriv4} or \eqref{deriv5} together with \eqref{deriv3} into \eqref{deriv2}, we obtain the desired results.

	\end{appendices}
	\bibliographystyle{IEEEtran}
	
	\bibliography{IEEEabrv,mybib}
\end{document}